\newtheorem{theorem}{Theorem}
\theoremstyle{definition}
\newtheorem{definition}{Definition}
\theoremstyle{remark}
\newtheorem{remark}{Remark}
\theoremstyle{definition}
\theoremstyle{definition}
\newtheorem{example}{Example}
\newcommand{\R}{\mathbb{R}}
\newcommand{\C}{\mathcal{C}}
\newcommand{\K}{\mathcal{K}}
\definecolor{blue}{RGB}{38,38,134}
\definecolor{darkblue}{RGB}{0,0,102}
\definecolor{lightblue}{RGB}{77,77,148}
\definecolor{gold}{RGB}{234, 170, 0}
\definecolor{metallic_gold}{RGB}{139, 111, 78}
\renewcommand{\cal}[1]{\mathcal{ #1 }}
\newcommand{\mb}[1]{\mathbf{ #1 }}
\newcommand{\der}[2]{\frac{\mathrm{d} #1 }{\mathrm{d} #2 }}
\newcommand{\derp}[2]{\frac{\partial #1 }{\partial #2 }}
\DeclareMathOperator*{\argmin}{argmin}
\begin{document}

\title{\LARGE \bf
Safe Controller Synthesis with Tunable \\ Input-to-State Safe Control Barrier Functions}

\author{Anil Alan$^{1}$, Andrew J. Taylor$^{2}$, Chaozhe R. He$^{1,3}$, G\'abor Orosz$^{1,4}$, and Aaron D. Ames$^{2}$
\thanks{This research is supported in part by the National Science Foundation, 
CPS Award \#1932091.}
\thanks{$^{1}$A. Alan, C. R. He, and G. Orosz are with the Department of Mechanical Engineering, University of Michigan, Ann Arbor, MI 48109, USA ${\tt\small \{anilalan,hchaozhe, orosz\}@umich.edu}$}%
\thanks{$^{2}$A. J. Taylor and A. D. Ames are with the Department of Mechanical and Civil Engineering, California Institute of Technology, Pasadena, CA 91125, USA ${\tt\small \{ajtaylor, ames\}@caltech.edu}$}%
\thanks{$^{3}$C. R. He is also with Navistar, Inc. Lisle, IL 60532, USA 
${\tt\small Chaozhe.He@navistar.com}$  }
\thanks{$^{4}$G. Orosz is also with the Department of Civil and Environmental Engineering, University of Michigan, Ann Arbor, MI 48109, USA}%
}

\maketitle

\thispagestyle{empty}         

\begin{abstract}
To bring complex systems into real world environments in a safe manner, they will have to be robust to uncertainties---both in the environment and the system.  This paper investigates the safety of control systems under input disturbances, wherein the disturbances can capture uncertainties in the system.  Safety, framed as forward invariance of sets in the state space, is ensured with the framework of control barrier functions (CBFs). Concretely, the definition of input-to-state safety (ISSf) is generalized to allow the synthesis of non-conservative, tunable controllers that are provably safe under varying disturbances. This is achieved by formulating the concept of tunable input-to-state safe control barrier functions (TISSf-CBFs), which guarantee safety for disturbances that vary with state and, therefore, provide less conservative means of accommodating uncertainty. The theoretical results are demonstrated with a simple control system with input disturbance and also applied to design a safe connected cruise controller for a heavy duty truck.
\end{abstract}

\begin{IEEEkeywords}
Safety critical control, barrier functions, input-to-state safety, connected automated vehicles
\end{IEEEkeywords}

\section{Introduction} \label{sec:intro}

\IEEEPARstart{S}{afety} is of the utmost importance for control systems, often prioritized over other performance requirements. A formal definition of safety has been proposed via the forward invariance of sets in the state space. Forward invariance can be ensured using \emph{barrier certificates} \cite{prajna2004safety} and \emph{barrier functions} \cite{ames2014control,Ames:2015IFAC}.
The extension of the latter to \emph{control barrier functions} (CBF) provides a tool for control design by imposing an easy-to-compute condition over a desired safe set. A recent survey on CBFs can be found in \cite{Ames:2019ECC}, and alternative methods for safety-critical control in \cite{Chen:18cdc,Leung:20}.

Among other relevant applications such as multi-agent systems \cite{GloCorEge:2017} and robotics \cite{AyuKou:17}, automated vehicles stand out as a natural candidate for safety-critical control. Due to recent developments of optical sensors and vehicle-to-everything (V2X) communication modules, many safety hazards in traffic can be detected. Thus, the goal of control design is to prevent safety breaches while utilizing sensory and V2X information. Examples of the use of control barrier functions include adaptive and connected cruise control \cite{ames2014control,Chaozhe:18} and lane keeping \cite{Xuetal:2020} problems. The effectiveness of the safety-critical control is typically demonstrated using simulations that may be transferred to the real world assuming that the systems model is accurate.

Uncertainties such as unmodeled dynamics and unknown input disturbances pose risks to guaranteeing safety in the real-world implementations. Robust CBF methods have been proposed to address this problem \cite{Emam19cdc,Nguyen:20arxiv,jankovic2018robust}. 
We focus on the concept of \emph{input-to-state safety} (ISSf) first introduced in \cite{Romdlony:16} and extended in \cite{Kolathaya:2019ISSf} to address bounded disturbances in the system's input. In this setting safety in the presence of disturbances is redefined as the forward invariance of a larger set. While control design under an unknown bounded input disturbance is possible utilizing \emph{input-to-state safety control barrier functions} (ISSf-CBF),
this approach lacks flexibility in design and often yields conservative results.

In this paper we revisit the fundamental definition of ISSf and ISSf-CBF and generalize them to enable a tunable control design. Our main results introduces \emph{tunable input-to-state safety control barrier function} (TISSf-CBF), a generalized version of ISSf-CBF, that permits controllers to provide safety guarantees in the presence of bounded disturbances in the input while reducing conservatism. In particular, it allows one to  tune the size of the larger invariant set so that it approximates the safe set of the undisturbed system without significantly impacting performance. Furthermore, our approach may be combined with existing methods using robust CBFs \cite{jankovic2018robust} when disturbances may be decoupled into external disturbances and disturbances in the system input. The applicability of our proposed approach is demonstrated using a simple example as well as the real-world application of a connected cruise controller for a heavy duty vehicle.

\section{Background and Motivation} \label{sec:background}

This section presents a review of safety and control barrier functions, followed by the notion of input-to-state safety in the presence of input disturbances. These theoretical concepts are illustrated with a simple example.  

\subsection{Safety and Control Barrier Functions}

We consider a nonlinear control-affine system: 
\begin{equation}
    \label{eq:openloop1}
    \dot{\mb{x}} = \mb{f}(\mb{x})+\mb{g}(\mb{x})\mb{u},
\end{equation}
with state ${\mb{x}\in\R^n}$, input ${\mb{u}\in\R^m}$, and functions ${\mb{f}:\R^n\to\R^n}$ and ${\mb{g}:\R^n\to\R^{n\times m}}$ assumed to be locally Lipschitz continuous on $\R^n$. Using a locally Lipschitz continuous state feedback controller ${\mb{k}:\R^n\to\R^m}$, with ${\mb{u}=\mb{k}(\mb{x})}$, yields the closed loop system:
\begin{equation}
    \label{eq:closedloop1}
    \dot{\mb{x}} = \mb{f}(\mb{x})+\mb{g}(\mb{x})\mb{k}(\mb{x}).
\end{equation}
As the functions $\mb{f}$, $\mb{g}$, and $\mb{k}$ are locally Lipschitz continuous, for any initial condition ${\mb{x}_0 \triangleq \mb{x}(0) \in \R^n}$, there exists a time interval ${I(\mb{x}_0)=[0,t_{\rm max})}$ such that $\mb{x}(t)$ is the unique solution to \eqref{eq:closedloop1} on $I(\mb{x}_0)$; see \cite{perko2013differential}.

We define the notion of safety in this context as forward invariance of a set in the state space. Specifically, suppose there exists a set ${\C\subset \R^n}$ defined as the 0-superlevel set of a continuously differentiable function ${h:\R^n \to \R}$:
\begin{align}
    \label{eq:C1} \C &\triangleq \left\{\mb{x} \in \R^n : h(\mb{x}) \geq 0\right\}, \\
    \label{eq:C2} \partial\C &\triangleq \{\mb{x} \in \R^n : h(\mb{x}) = 0\},\\
    \label{eq:C3} \textrm{Int}(\C) &\triangleq \{\mb{x} \in \R^n : h(\mb{x}) > 0\}.
\end{align}
The set $\C$ is said to be \emph{forward invariant} if for any initial condition ${\mb{x}_0 \in \C}$, ${\mb{x}(t)\in\C}$ for all ${t\in I(\mb{x}_0)}$. In this case, we call the system \eqref{eq:closedloop1}  \emph{safe} with respect to the set $\C$, and refer to $\C$ as the \emph{safe set}.

A continuous function ${\alpha:[0,\infty)\to[0,\infty)}$ is said to be \emph{class $\cal{K}_\infty$} (${\alpha\in\cal{K}_{\infty}}$) if $\alpha$ is strictly monotonically increasing with ${\alpha(0)=0}$ and ${\lim_{r\to\infty}\alpha(r)=\infty}$, and a continuous function ${\alpha:\R\to\R}$ is said to be \emph{extended class $\cal{K}_\infty$} (${\alpha\in\cal{K}_{\infty,\rm e}}$) if it belongs to $\cal{K}_\infty$ and ${\lim_{r\to-\infty}\alpha(r)=-\infty}$. With these definitions, control barrier functions, as defined in \cite{ames2017control}, provide a tool for synthesizing controllers that enforce the safety of $\C$ (where a strict inequality is used for the reasons outlined in \cite{jankovic2018robust}).
\begin{definition}[\textit{Control Barrier Function (CBF)} \cite{ames2017control}]
Let ${\C\subset\R^n}$ be the 0-superlevel set of a continuously differentiable function ${h:\R^n\to\R}$ with ${\derp{h}{\mb{x}}(\mb{x}) \neq \mb{0}}$ when ${h(\mb{x})=0}$. The function $h$ is a \emph{control barrier function} (CBF) for \eqref{eq:openloop1} on $\C$ if there exists ${\alpha\in\K_{\infty,\rm e}}$ such that for all ${\mb{x}\in\C}$:
\begin{equation}
\label{eq:cbf}
     \sup_{u\in\R^m} \dot{h}(\mb{x},\mb{u}) \triangleq \underbrace{\derp{h}{\mb{x}}(\mb{x})\mb{f}(\mb{x})}_{L_\mb{f}h(\mb{x})}+\underbrace{\derp{h}{\mb{x}}(\mb{x})\mb{g}(\mb{x})}_{L_\mb{g}h(\mb{x})}\mb{u}>-\alpha(h(\mb{x})).
\end{equation}
\end{definition}

\noindent Given a CBF $h$ for \eqref{eq:openloop1} and a corresponding ${\alpha\in\cal{K}_{\infty,\rm e}}$, we define the point-wise set of control values satisfying \eqref{eq:cbf} as:
\begin{equation}
    K_{\textrm{CBF}}(\mb{x}) \triangleq \left\{\mb{u}\in\R^m ~\left|~ \dot{h}(\mb{x},\mb{u})\geq-\alpha(h(\mb{x})) \right. \right\}.
\end{equation}

\begin{theorem}[\cite{ames2017control}]
Let ${\C\subset\R^n}$ be the 0-superlevel set of a continuously differentiable function ${h:\R^n\to\R}$ with ${\derp{h}{\mb{x}}(\mb{x}) \neq \mb{0}}$ when ${h(\mb{x})=0}$. If $h$ is a CBF for \eqref{eq:openloop1} on $\C$, then any Lipschitz continuous controller with ${\mb{k}(\mb{x}) \in K_{\rm CBF }(\mb{x})}$ for all ${\mb{x}\in\C}$ renders  \eqref{eq:closedloop1} safe with respect to the set $\C$.
\end{theorem}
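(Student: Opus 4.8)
The plan is to reduce the claim to a scalar differential inequality along closed-loop trajectories and then invoke a comparison argument. First I would fix an initial condition $\mb{x}_0 \in \C$ and let $\mb{x}(t)$ denote the unique solution of \eqref{eq:closedloop1} on $I(\mb{x}_0) = [0, t_{\max})$; this solution exists and is unique because $\mb{f}$, $\mb{g}$, and $\mb{k}$ are locally Lipschitz, so the closed-loop vector field $\mb{f} + \mb{g}\mb{k}$ is locally Lipschitz. Since $\mb{k}(\mb{x}) \in K_{\rm CBF}(\mb{x})$ for every $\mb{x} \in \C$, substituting $\mb{u} = \mb{k}(\mb{x})$ into the membership condition yields, for all $\mb{x} \in \C$,
\begin{equation*}
\derp{h}{\mb{x}}(\mb{x})\big(\mb{f}(\mb{x}) + \mb{g}(\mb{x})\mb{k}(\mb{x})\big) \geq -\alpha(h(\mb{x})).
\end{equation*}

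Next I would define the scalar function $\lambda(t) \triangleq h(\mb{x}(t))$, which is differentiable along the trajectory since $h$ is continuously differentiable and $\mb{x}(t)$ solves the ODE. Wherever $\mb{x}(t) \in \C$, the chain rule together with the displayed inequality gives the differential inequality $\dot{\lambda}(t) \geq -\alpha(\lambda(t))$, with initial value $\lambda(0) = h(\mb{x}_0) \geq 0$. I would then compare $\lambda$ against the solution $y$ of the scalar initial value problem $\dot{y} = -\alpha(y)$, $y(0) = \lambda(0) \geq 0$. The crucial observation is that $\alpha \in \mathcal{K}_{\infty,\mathrm{e}}$ satisfies $\alpha(0) = 0$, so $y \equiv 0$ is an equilibrium of the comparison system; consequently any solution starting at $y(0) \geq 0$ remains nonnegative for all time. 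The comparison lemma then delivers $\lambda(t) \geq y(t) \geq 0$, i.e. $h(\mb{x}(t)) \geq 0$, so $\mb{x}(t) \in \C$.

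The main obstacle is that the CBF inequality is only guaranteed to hold on $\C$, whereas the comparison argument above implicitly needs the inequality along the whole trajectory, so I must first rule out that the trajectory escapes $\C$. I would handle this with a boundary (Nagumo-type) argument by contradiction: if the trajectory left $\C$, let $t^* = \inf\{t \in I(\mb{x}_0) : h(\mb{x}(t)) < 0\}$, and note by continuity that $h(\mb{x}(t^*)) = 0$ with $\mb{x}(t^*) \in \partial\C \subset \C$ and $\mb{x}(t) \in \C$ on $[0,t^*]$. Applying the CBF inequality precisely at the boundary point $\mb{x}(t^*)$ gives $\dot{\lambda}(t^*) \geq -\alpha(0) = 0$, so $\lambda$ is nondecreasing as it reaches the boundary, contradicting the choice of $t^*$ as an instant past which $h$ turns negative. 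Hence $\mb{x}(t) \in \C$ for all $t \in I(\mb{x}_0)$, which is exactly forward invariance of $\C$, and \eqref{eq:closedloop1} is safe with respect to $\C$. A minor technical point to keep in mind is the possible non-Lipschitzness of $\alpha$ at the origin, which can affect uniqueness of the comparison solution; this is circumvented either by working with the maximal nonnegative comparison solution or by the boundary argument itself, which does not require uniqueness.
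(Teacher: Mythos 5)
Your skeleton (a comparison estimate inside $\C$ plus a boundary argument to rule out escape) is reasonable, and you correctly identify the crux: the inequality $\dot{\lambda} \geq -\alpha(\lambda)$ is only available while the trajectory remains in $\C$. The gap is in how you close that loophole. At the escape time $t^*$ the CBF inequality gives only $\dot{\lambda}(t^*) \geq -\alpha(0) = 0$, and a nonnegative derivative at a single instant does not make $\lambda$ ``nondecreasing as it reaches the boundary,'' nor does it contradict $\lambda$ turning negative immediately afterwards: the function $\lambda(t) = -(t-t^*)^3$ satisfies $\lambda \geq 0$ before $t^*$, $\lambda(t^*) = 0$, $\dot{\lambda}(t^*) = 0 \geq 0$, and $\lambda < 0$ after $t^*$. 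Worse, no contradiction can be extracted from pointwise information at $t^*$ at all: since $\lambda(t) \geq 0 = \lambda(t^*)$ for $t < t^*$, the left difference quotients force $\dot{\lambda}(t^*) \leq 0$, so the CBF inequality at $t^*$ merely pins down $\dot{\lambda}(t^*) = 0$, which is perfectly consistent with escape. This is exactly the step that Nagumo's theorem exists to fill, and its proof needs genuinely more than the sign of $\dot{\lambda}$ along one trajectory at one time (it uses the subtangentiality condition at \emph{all} boundary points together with uniqueness of solutions). Your comparison-lemma half is fine as far as it goes, but it only re-derives $\lambda(t) \geq 0$ on $[0,t^*]$, which you already know from the definition of $t^*$.

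There are two standard repairs. (i) Do what this paper itself does in the proof of its Theorem \ref{theo:TISSf} (the statement you are proving is quoted from \cite{ames2017control}, not proved in the paper): verify the hypotheses of Nagumo's theorem --- for $\mb{x} \in \partial\C$ the membership $\mb{k}(\mb{x}) \in K_{\rm CBF}(\mb{x})$ gives $\dot{h}(\mb{x},\mb{k}(\mb{x})) \geq -\alpha(0) = 0$, the standing assumption gives $\derp{h}{\mb{x}}(\mb{x}) \neq \mb{0}$ on $\partial\C$ (so the closed-loop vector field lies in the tangent cone to $\C$ at every boundary point), and local Lipschitz continuity gives uniqueness of solutions --- and then invoke Nagumo's theorem \cite{Nagumo:1942} to conclude forward invariance. (ii) Alternatively, strengthen the hypothesis as in \cite{ames2017control}, where the CBF inequality is imposed on an open set $D \supset \C$: then your contradiction can be completed correctly, because if $\lambda(t_1) < 0$ one takes $t_2 \triangleq \sup\{t \leq t_1 : \lambda(t) = 0\}$, and on $(t_2,t_1]$ one has $\lambda < 0$, hence $\dot{\lambda} \geq -\alpha(\lambda) > 0$, so $\lambda$ is strictly increasing there and $\lambda(t_1) > \lim_{t \downarrow t_2} \lambda(t) = 0$, a contradiction. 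Note that this corrected argument uses the differential inequality at points where $h < 0$, which the hypothesis ``$\mb{k}(\mb{x}) \in K_{\rm CBF}(\mb{x})$ for $\mb{x} \in \C$'' does not supply --- which is precisely why your pointwise argument at $t^*$ cannot be patched and Nagumo's theorem (or the neighborhood hypothesis) is required.
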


\begin{example}
\label{ex:cbf}
Consider a dynamic system:
\begin{equation}
    \label{eq:ex_model}
    \dot{x}_1 =-x_2, \quad
    \dot{x}_2 =u,
\end{equation}
with state ${\mb{x} \in \R^2}$ and input ${u\in\R}$, a feedback controller:
\begin{equation}
\label{eq:ex_cont}
    k(\mb{x}) = x_1 - 2x_2 - 1,
\end{equation}
and the CBF candidate:
\begin{equation}
\label{eq:ex_h}
    h(\mb{x}) = x_1 - x_2,
\end{equation}
that defines the set $\C$ as: 
\begin{equation}
\label{eqn:ex_C}
    \C = \left\{ \mb{x} \in \R^2  ~\left|~  x_1 - x_2 \geq 0  \right. \right\}.
\end{equation}
The evolution of $h$ under \eqref{eq:closedloop1} is given by:
\begin{equation}
    \dot{h}(\mb{x}) = L_\mb{f}h(\mb{x}) + L_\mb{g}h(\mb{x}) k(\mb{x})
    = \underbrace{-x_1+x_2}_{-h(\mb{x})}+1 > -h(\mb{x}), \nonumber
\end{equation}
that is, choosing the extended class $\cal{K}_\infty$ function ${\alpha(r)=r}$ yields that ${k(\mb{x}) \in K_{\textrm{CBF}}(\mb{x})}$.
We present simulation results for the closed loop system in Fig.~\ref{fig:Example}(a), where all the trajectories initiated from different initial conditions ${\mb{x}(0) \in \C}$ safely approach the stable equilibrium point {$(1,0)$}.
\end{example}

\subsection{Input-to-State Safety}

Unmodeled effects and disturbances may make it infeasible for a state feedback controller ${\mb{k}(\mb{x})}$ to be implemented exactly. Instead, a potentially time-varying disturbance ${\mb{d}:\R_{\geq 0}\to\R^m}$ is added to the controller, such that ${\mb{u}=\mb{k}(\mb{x})+\mb{d}(t)}$, resulting in the closed loop system:
\begin{equation}
    \label{eq:closedloop2}
    \dot{\mb{x}} = \mb{f}(\mb{x})+\mb{g}(\mb{x})\mb{k}(\mb{x}) +\mb{g}(\mb{x})\mb{d}(t).
\end{equation}
The safety guarantees endowed by controllers satisfying ${\mb{k}(\mb{x}) \in K_{\textrm{CBF}}(\mb{x})}$ may no longer be valid for the disturbed closed loop system. 
Thus, we wish to design a safety-critical controller that ensures safety in the presence of disturbances. We consider the disturbed control system:
\begin{equation}
    \label{eq:openloop2}
    \dot{\mb{x}} = \mb{f}(\mb{x})+\mb{g}(\mb{x})\mb{u} +\mb{g}(\mb{x})\mb{d}(t),
\end{equation}
where the disturbance $\mb{d}$ is assumed to be bounded, that is, ${\Vert\mb{d}\Vert_{\infty} = \sup_{t \geq 0} \| \mb{d}(t) \| < \infty}$. With disturbances, we look for a larger set ${\mathcal{C}_{ {\delta}}\subset\R^n}$  parameterized by $\delta\geq 0$, i.e., ${\C \subseteq \C_{ {\delta}}}$,  that is forward invariant for all $\mb{d}$ satisfying ${\Vert\mb{d}\Vert_\infty \leq \delta}$. We require $\C_{ {\delta}}$ to grow monotonically with $ {\delta}$, and recover the original safe set in the absence of the disturbance, i.e., ${\C_{ {\delta}} \equiv \C}$ when ${ {\delta=0}}$. 
Thus, define a function ${h_{ {\delta}}:\R^n\times\R_{\geq 0}\to\R}$ as:
\begin{equation}
    \label{eq:hd_OISSf}
    h_{ {\delta}}(\mb{x}, {\delta}) \triangleq h(\mb{x}) + \gamma( {\delta}),
\end{equation}
with ${\gamma\in\cal{K}_{\infty}}$ and define $\C_{ {\delta}}$ as its 0-superlevel set:
\begin{align}
    \C_{ {\delta}} &\triangleq \left\{\mb{x} \in \R^n : h_{ {\delta}}(\mb{x}, {\delta}) \geq 0\right\}, \label{eq:Cd_OISSf1} \\
    \partial\C_{ {\delta}} &\triangleq \{\mb{x} \in \R^n : h_{ {\delta}}(\mb{x}, {\delta}) = 0\}, \label{eq:Cd_OISSf2}\\
    \textrm{Int}(\C_{ {\delta}}) &\triangleq \{\mb{x} \in \R^n : h_{ {\delta}}(\mb{x}, {\delta}) > 0\}. \label{eq:Cd_OISSf3}
\end{align}

\begin{definition}[\textit{Input-to-State Safety}]
Let ${\C\subset\R^n}$ be the 0-superlevel set of a continuously differentiable function ${h:\R^n\to\R}$. The system \eqref{eq:closedloop2} is \textit{input-to-state safe} (ISSf) if there exists ${\gamma \in \cal{K}_\infty}$ and $\delta\geq 0$ such that for all $\mb{d}$ satisfying ${\Vert\mb{d}\Vert_\infty\leq\delta}$, the set $\C_{ {\delta}}$ defined by \eqref{eq:Cd_OISSf1} is forward invariant. In this case, we refer to the original set $\C$ as an \emph{input-to-state safe set} (ISSf set).
\end{definition}

Given a controller ${\mb{k}(\mb{x})}$ that makes the undisturbed system \eqref{eq:closedloop1} safe with respect to the set $\C$ for a given CBF $h$, i.e., ${{\mb{k}(\mb{x}) \in K_{\textrm{CBF}}(\mb{x})}}$, we consider the following modification:
\begin{equation}
    \label{eq:OISSf_cont}
    \mb{u} = \mb{k}(\mb{x}) + \frac{1}{\epsilon_0}L_\mb{g}h(\mb{x})^\top,
\end{equation}
where ${\epsilon_0\in\R_{>0}}$ is a positive constant. Motivated by this controller, we give the definition of the \emph{input-to-state safe control barrier function}:
\begin{definition}[\textit{Input-to-State Safe Control Barrier Function (ISSf-CBF)}]
\label{def:def3}
Let ${\C\subset\R^n}$ be the 0-superlevel set of a continuously differentiable function ${h:\R^n\to\R}$ with ${\derp{h}{\mb{x}}(\mb{x}) \neq \mb{0}}$ when ${h(\mb{x})=0}$. Then $h$ is an \emph{input-to-state safe control barrier function} (ISSf-CBF) for \eqref{eq:openloop2} on $\C$ if there exists ${\alpha\in\K_{\infty,\rm e}}$ and ${\epsilon_0>0}$ such that for all ${\mb{x}\in\R^n}$:
\begin{equation}
\label{eq:oissf-cbf} 
\sup_{\mb{u}\in\R^m} \left[ L_\mb{f}h(\mb{x}) + L_\mb{g}h(\mb{x})\mb{u} \right] >
-\alpha(h(\mb{x})) + \frac{\Vert L_{\mb{g}}h(\mb{x}) \Vert^2}{\epsilon_0}.
\end{equation}
\end{definition}
\noindent As with CBFs, we may define the point-wise set of control values satisfying \eqref{eq:oissf-cbf}:
\begin{align}
\resizebox{1\hsize}{!}{
    ${K_{\textrm{ISSf}}(\mb{x}) \triangleq \left\{\mb{u}\in\R^m ~\left|~  \dot{h}(\mb{x},\mb{u})\geq-\alpha(h(\mb{x})) + \frac{\Vert L_{\mb{g}}h(\mb{x}) \Vert^2}{\epsilon_0}  \right. \right\}.}$
}     
\end{align}

\begin{theorem}[{\cite{Kolathaya:2019ISSf}}]    \label{theo:ISSf}
Let ${\C\subset\R^n}$ be the 0-superlevel set of a continuously differentiable function ${h:\R^n\to\R}$ with ${\derp{h}{\mb{x}}(\mb{x}) \neq \mb{0}}$ when ${h(\mb{x})=0}$ and $\delta\geq 0$. If $h$ is an ISSf-CBF for \eqref{eq:openloop2} on $\C$, then for any Lipschitz continuous controller with ${\mb{k}(\mb{x})\in K_{\rm ISSf}}$ for all ${\mb{x}\in\R^n}$ and for all $\mb{d}$ satisfying ${\Vert\mb{d}\Vert_\infty \leq \delta}$, the system \eqref{eq:closedloop2} is safe with respect to $\C_{ {\delta}}$ defined as in \eqref{eq:Cd_OISSf1} with  ${\gamma\in\K_\infty}$ defined as:
\begin{equation}    \label{eq:OISSf_gamma}
    \gamma( {\delta}) \triangleq -\alpha^{-1}\left(-\frac{\epsilon_0  {\delta}^2}{4}\right),
\end{equation}
where ${\alpha^{-1} \in \K_{\infty,{\rm e}}}$. This implies $\C$ is an ISSf set.
\end{theorem}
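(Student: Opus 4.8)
The plan is to differentiate $h$ along the disturbed closed loop \eqref{eq:closedloop2} and show that the resulting scalar differential inequality confines any trajectory starting in $\C_\mb{d}$ to $\C_\mb{d}$. First I would write
\[
\dot{h}(\mb{x}) = L_\mb{f}h(\mb{x}) + L_\mb{g}h(\mb{x})\mb{k}(\mb{x}) + L_\mb{g}h(\mb{x})\mb{d}(t),
\]
and invoke the hypothesis $\mb{k}(\mb{x})\in K_{\rm ISSf}(\mb{x})$, which lower-bounds the first two terms by $-\alpha(h(\mb{x})) + \norm{L_\mb{g}h(\mb{x})}^2/\epsilon_0$, yielding
\[
\dot{h}(\mb{x}) \geq -\alpha(h(\mb{x})) + \frac{\norm{L_\mb{g}h(\mb{x})}^2}{\epsilon_0} + L_\mb{g}h(\mb{x})\mb{d}(t).
\]

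The crux of the argument is to absorb the disturbance term into a state-independent constant. I would apply Cauchy--Schwarz to get $L_\mb{g}h(\mb{x})\mb{d}(t) \geq -\norm{L_\mb{g}h(\mb{x})}\,\norm{\mb{d}}_\infty$, and then complete the square in the scalar $s = \norm{L_\mb{g}h(\mb{x})}\geq 0$: the quadratic $s^2/\epsilon_0 - s\,\norm{\mb{d}}_\infty$ attains its minimum value $-\epsilon_0\norm{\mb{d}}_\infty^2/4$ at $s = \epsilon_0\norm{\mb{d}}_\infty/2$. This produces the state-independent inequality $\dot{h}(\mb{x}) \geq -\alpha(h(\mb{x})) - \epsilon_0\norm{\mb{d}}_\infty^2/4$, and the constant $\epsilon_0\norm{\mb{d}}_\infty^2/4$ is precisely the argument of $\beta^{-1}$ in the definition \eqref{eq:OISSf_gamma} of $\gamma$.

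Next I would verify that $\beta(r) = -\alpha(-r)$ is a legitimate $\classKinfty$ function on $[0,\infty)$: it satisfies $\beta(0)=0$, is strictly increasing, and is unbounded because $\alpha\in\K_{\infty,\rm e}$, so $\beta^{-1}$ and hence $\gamma$ are well defined, with $\gamma\in\classKinfty$ as the composition of the $\classKinfty$ map $r\mapsto \epsilon_0 r^2/4$ with $\beta^{-1}$. The key observation is that $c \triangleq -\gamma(\norm{\mb{d}}_\infty)$ is an equilibrium of the comparison ODE $\dot{y} = -\alpha(y) - \epsilon_0\norm{\mb{d}}_\infty^2/4$: indeed $-\alpha(-\gamma) = \beta(\gamma) = \epsilon_0\norm{\mb{d}}_\infty^2/4$ by construction, so the right-hand side vanishes at $y=c$, is positive for $y<c$, and negative for $y>c$ since $\alpha$ is strictly increasing.

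Finally I would invoke the comparison lemma. For $\mb{x}_0\in\C_\mb{d}$ we have $h(\mb{x}_0)\geq -\gamma(\norm{\mb{d}}_\infty) = c$, so the solution of the comparison ODE with $y(0)=h(\mb{x}_0)$ remains at or above the equilibrium $c$ for all time; hence $h(\mb{x}(t)) \geq y(t) \geq c$, i.e., $h_\mb{d}(\mb{x}(t),\norm{\mb{d}}_\infty) = h(\mb{x}(t)) + \gamma(\norm{\mb{d}}_\infty) \geq 0$. This establishes forward invariance of $\C_\mb{d}$, hence safety with respect to $\C_\mb{d}$, and since the exhibited $\gamma$ lies in $\classKinfty$, the definition of input-to-state safety is met and $\C$ is an ISSf set. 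The main obstacle is the completing-the-square step together with correctly matching the resulting constant to \eqref{eq:OISSf_gamma} and confirming that $c=-\gamma$ is the comparison equilibrium; the comparison-lemma conclusion is then routine.
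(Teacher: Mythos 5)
Your proof is correct, but it takes a genuinely different route from the paper. The paper never proves Theorem~\ref{theo:ISSf} in isolation: the result is inherited from the ISSf literature and recovered as the constant-$\epsilon$ special case of Theorem~\ref{theo:TISSf}, and the paper's proof of that general result is a boundary (Nagumo-type) argument. It performs exactly your first two steps---the Cauchy--Schwarz bound and completing the square in $\Vert L_\mb{g}h(\mb{x})\Vert$ to obtain $\dot h \geq -\alpha(h(\mb{x})) - \epsilon\Vert\mb{d}\Vert_\infty^2/4$---but then, instead of a comparison argument, it evaluates the derivative of the inflated barrier on $\partial\C_{\mb{d},\rm T}$, where the defining identity of the boundary makes this lower bound vanish, checks that the gradient of the inflated barrier is nonzero there, and invokes Nagumo's theorem. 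Your route replaces Nagumo with a scalar comparison: you identify $c=-\gamma(\Vert\mb{d}\Vert_\infty)$ as the equilibrium of $\dot y = -\alpha(y)-\epsilon_0\Vert\mb{d}\Vert_\infty^2/4$ and conclude $h(\mb{x}(t))\geq c$ from $h(\mb{x}_0)\geq c$. This buys two things: you never use the hypothesis $\derp{h}{\mb{x}}(\mb{x})\neq\mb{0}$ (which the sub-tangentiality argument needs), and you sidestep any delicacy in applying Nagumo's theorem to dynamics made time-varying by $\mb{d}(t)$, since your differential inequality holds pointwise in $t$. What the paper's route buys is that the identical boundary computation carries over verbatim to state-dependent $\epsilon(h(\mb{x}))$, which is the whole point of Theorem~\ref{theo:TISSf}; a comparison argument can be adapted to that case too, but then condition \eqref{eq:eps_cond} is needed to pin down the sign structure of the comparison field around its root. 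One point to tighten in your write-up: $\alpha$ is only assumed continuous, so the comparison ODE need not have unique solutions; either quote the comparison lemma in its continuous-right-hand-side (minimal solution) form, or replace it by the elementary first-crossing argument---if $h(\mb{x}(\cdot))$ ever dropped below $c$, then on the interval following the last time it equals $c$ one has $\dot h \geq -\alpha(h)-\epsilon_0\Vert\mb{d}\Vert_\infty^2/4 \geq 0$, contradicting the drop. With that repair your argument is complete.
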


\begin{remark}
The original ISSf-CBF definition proposed in \cite{Kolathaya:2019ISSf} requires the condition:
\begin{equation}
\label{eqn:issfcbfold}
\sup_{\mb{u}\in\R^m} \left[ L_\mb{f}h(\mb{x}) + L_\mb{g}h(\mb{x})(\mb{u+d}) \right] >
-\alpha(h(\mb{x})) - \iota ( \Vert \mb{d} \Vert_\infty ),
\end{equation}
for some ${\iota \in \K_{\infty}}$. It also proves that a function satisfying \eqref{eq:oissf-cbf} meets the condition in \eqref{eqn:issfcbfold} for $\iota$ defined as:
\begin{equation}
    \iota ( \Vert \mb{d} \Vert_\infty ) \triangleq \frac{\epsilon_0 \Vert \mb{d} \Vert_\infty^2}{4}.
\end{equation} 
We use the more specific definition in \eqref{eq:oissf-cbf} as it is better suited for the controller design presented in this letter. 
\end{remark}

As ${\alpha^{-1}\in\cal{K}_{\infty,{\rm e}}}$, a smaller $\epsilon_0$ implies a smaller value of ${\gamma( {\delta})}$ for a given $ {\delta\geq 0}$, which reduces the difference between the sets $\C$ and $\C_{ {\delta}}$. However, taking $\epsilon_0$ to be small increases the right hand side of \eqref{eq:oissf-cbf}, and forces a more restrictive safety condition to be met by $\mb{k}$. Controllers satisfying this more restrictive condition may lead to undesirable performance as illustrated by the example below.

\begin{figure}[t]
    \centering
    \includegraphics[trim=70 70 65 70,clip, scale = 0.45]{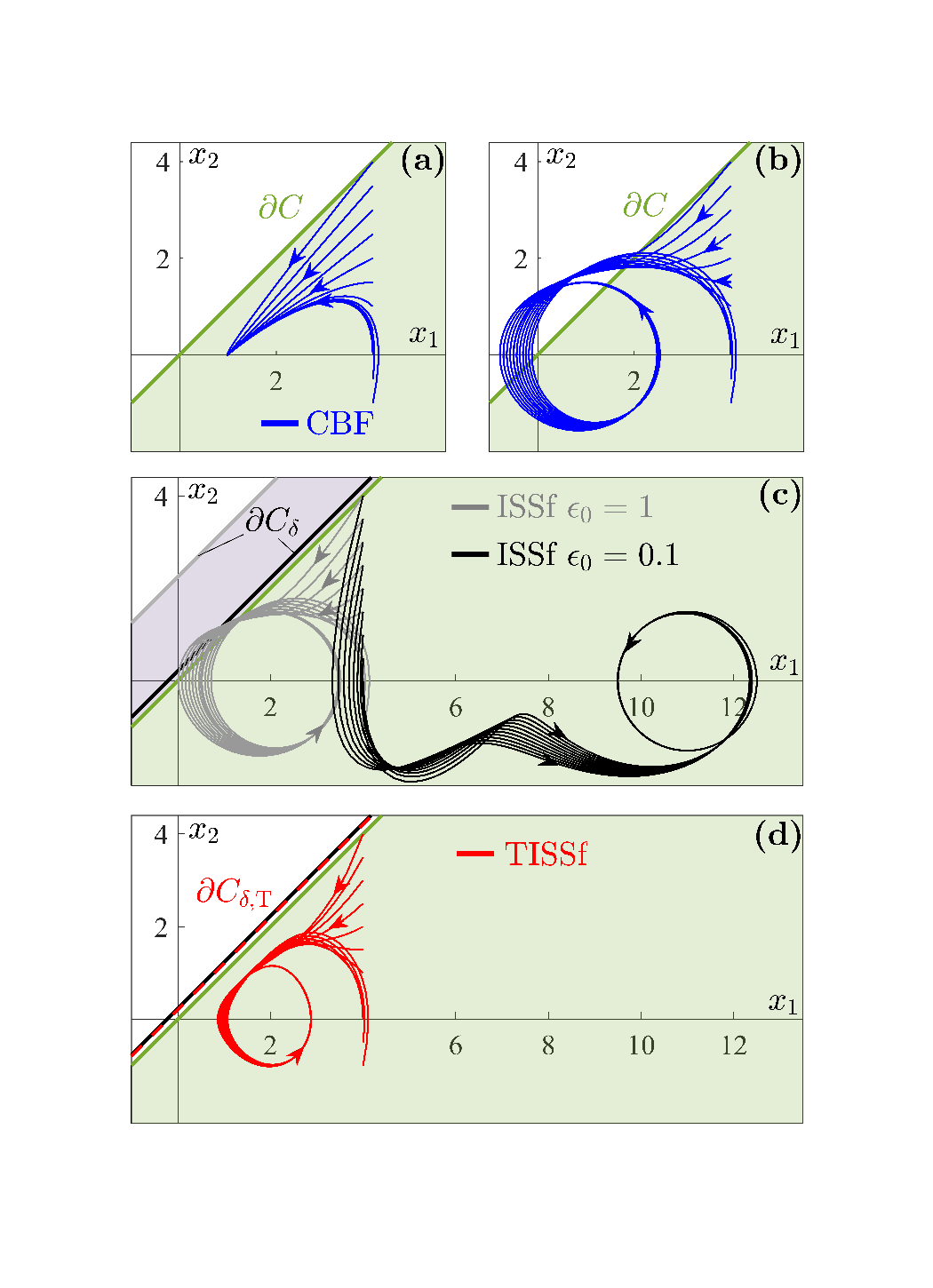}
    \caption{The sets $\C$, $\C_{ {\delta}}$ and $\C_{ {\delta},\rm T}$ (shaded) and simulation results for Examples \ref{ex:cbf}-\ref{ex:tissf}. (a) The boundary $\partial \C$ (green) and simulated trajectories with controller \eqref{eq:ex_cont} without disturbance. (b) Trajectories with disturbance. (c) The boundary $\partial \C_{ {\delta}}$ for ${\epsilon_0=0.1}$ (gray) and ${\epsilon_0=1}$ (black) and simulation results for controller \eqref{eq:ex_contISSf}. (d) The boundary $\partial \C_{ {\delta},\rm T}$ (red) and simulation results for  controller \eqref{eq:ex_contTISSf}.  }
    \label{fig:Example}
    \vspace{-3mm}
\end{figure}

\begin{example}
\label{ex:dist}
We now introduce a disturbance to the example:
\begin{equation}
    \label{eq:ex_modeld}
    \dot{x}_1 =-x_2, \quad
    \dot{x}_2 =u+d(t),
\end{equation}
where ${d:\R_{\geq 0}\to\R}$. Fig.~\ref{fig:Example}(b) depicts the simulation results with the controller $k(\mb{x})$ defined in \eqref{eq:ex_cont} for the harmonic disturbance ${d(t)= {\delta}\sin{t}}$ with ${ {\delta}=3}$. We see that the disturbance makes the state trajectories leave $\C$ periodically. 
According to \eqref{eq:OISSf_cont}, we consider the modified controller:
\begin{equation} \label{eq:ex_contISSf}
     u = k(\mb{x}) + \frac{L_\mb{g}h(\mb{x})}{\epsilon_0} = x_1 - 2x_2 - 1 - \frac{1}{\epsilon_0},
\end{equation}
cf.~\eqref{eq:ex_cont}. The evolution of $h$ under \eqref{eq:closedloop2} is given by:
\begin{equation}
    L_\mb{f}h(\mb{x}) + L_\mb{g}h(\mb{x})k(\mb{x}) = \underbrace{-x_1 + x_2}_{-h(\mb{x})} + 1 + \frac{1}{\epsilon_0} > -h(\mb{x}) + \frac{1}{\epsilon_0}, \nonumber
\end{equation}
such that with ${\alpha(r)=r}$, $h$ is an ISSf-CBF for \eqref{eq:ex_modeld} on the set $\C$ defined in \eqref{eqn:ex_C}. Furthermore, with ${-\alpha^{-1}(-r)=r}$, 
we have ${\gamma( {\delta})=\frac{\epsilon_0  {\delta}^2}{4}}$, yielding:
\begin{equation}
    \C_{ {\delta}} = \left\{ \mb{x} \in \R^2  ~\left|~  x_1 - x_2 + \frac{\epsilon_0  {\delta}^2}{4} \geq 0  \right. \right\}.
\end{equation}
Figure~\ref{fig:Example}(c) portrays the boundary $\partial \C_{ {\delta}}$ for ${\epsilon_0=1}$ (gray) and ${\epsilon_0=0.1}$ (black). A larger $\epsilon_0$ implies a larger gap between the original set $\C$ and the forward invariant set $\C_{ {\delta}}$, and as a result, gives way to the trajectories leaving $\C$. In contrast, a smaller $\epsilon_0$ shifts $\C_{ {\delta}}$ closer to $\C$,  yielding trajectories that stay in $\C$. This, however, comes with an expense of substantially effecting the performance as the trajectories are pushed further inside $\C$.
\end{example}

\section{Main Result}\label{sec:mainresult}

In this section, we present the main result of the paper by introducing a new method for characterizing safety in the presence of disturbances. It uses a more general definition of the set $\C_{ {\delta}}$ to enable synthesis of controllers that can ensure safety without compromising performance. 

The previous specification of $h_{ {\delta}}$ and $\gamma$ as in \eqref{eq:hd_OISSf} and \eqref{eq:OISSf_gamma}, respectively, implies that the difference ${h_{ {\delta}}(\mb{x})-h(\mb{x})}$ is constant
for all $\mb{x}\in\C_{ {\delta}}$ for a given $ {\delta}$. 
In other words, requiring a constant $\epsilon_0$ imposes strong restrictions on the structure of $h_{ {\delta}}(\mb{x})$ and $\C_{ {\delta}}$. As a result, prioritizing safety with a smaller $\epsilon_0$ may lead to overcompensation and may affect the performance in an undesirable fashion. We wish to find a new set that is still forward invariant, but allows more flexibility in designing controllers. To this end, define the function $h_{ {\delta},T}:\R^n\times\R_{\geq 0}\to\R$ as:
\begin{equation}
    \label{eq:tissf_hd}
    h_{ {\delta},\rm T}(\mb{x}, {\delta}) = h(\mb{x}) + \gamma_{\rm T} (h(\mb{x}), {\delta}),
\end{equation}
with ${\gamma_{\rm T}:\R\times\R_{\geq0}\to\R_{\geq0}}$ continuously differentiable in its first argument and ${\gamma_{\rm T}(a,\cdot) \in \cal{K}_{\infty}}$ for all ${a\in\R}$. Indeed, $h_{ {\delta}}$
defined by \eqref{eq:hd_OISSf} is a special case of $h_{ {\delta},\rm T}$ defined by \eqref{eq:tissf_hd}. We define $\C_{ {\delta},\rm T}$ as the 0-superlevel set of the function $h_{ {\delta},\rm T}$:
\begin{align}
    \C_{ {\delta},\rm T} &\triangleq \left\{\mb{x} \in \R^n : h_{ {\delta},\rm T}(\mb{x}, {\delta}) \geq 0\right\}, \label{eq:CdT1} \\
    \partial\C_{ {\delta},\rm T} &\triangleq \{\mb{x} \in \R^n : h_{ {\delta},\rm T}(\mb{x}, {\delta}) = 0\}, \label{eq:CdT2}\\
    \textrm{Int}(\C_{ {\delta},\rm T}) &\triangleq \{\mb{x} \in \R^n : h_{ {\delta},\rm T}(\mb{x}, {\delta}) > 0\}. \label{eq:CdT3}
\end{align}
Note that ${\C \subset \C_{ {\delta},\rm T}}$ for $ {\delta}>0$. In the absence of disturbances (${ {\delta}=0}$) we recover the original set ${(\C_{ {\delta},\rm T} \equiv \C)}$ as ${h_{ {\delta},\rm T}(\mb{x},0)=h(\mb{x})}$. Also, $\C_{ {\delta},\rm T}$ grows monotonically with $ {\delta}$.
Analogous to \eqref{eq:OISSf_cont}, we propose the controller:
\begin{equation}\label{eq:TISSf_cont}
    \mb{u} = \mb{k}(\mb{x}) + \frac{1}{\epsilon(h(\mb{x}))}L_\mb{g}h(\mb{x})^\top,
\end{equation}
where ${\epsilon:\R\to\R_{>0}}$ is a continuously differentiable function and ${\mb{k}(\mb{x}) \in K_{\textrm{CBF}}(\mb{x})}$. This controller motivates a generalization of Definition~\ref{def:def3}, and a corresponding safety result.
\begin{definition}[\textit{Tunable Input-to-State Safe Control Barrier Function (TISSf-CBF)}]
Let ${\C\subset\R^n}$ be the 0-superlevel set of a continuously differentiable function ${h:\R^n\to\R}$ with ${\derp{h}{\mb{x}}(\mb{x}) \neq \mb{0}}$ when ${h(\mb{x})=0}$. Then $h$ is a \emph{tunable input-to-state safe control barrier function} (TISSf-CBF) for \eqref{eq:openloop2} on $\C$ with continuously differentiable function ${\epsilon:\R\to\R_{>0}}$ if there exists ${\alpha \in \K_{\infty,\rm e}}$ such that for all ${\mb{x}\in \R^n}$:
\begin{equation}
\label{eq:tissf-cbf} 
\sup_{\mb{u}\in\R^m}  \left[ L_\mb{f}h(\mb{x}) + L_\mb{g}h(\mb{x})\mb{u} \right] >
-\alpha(h(\mb{x})) + \frac{\Vert L_{\mb{g}}h(\mb{x}) \Vert^2}{\epsilon(h(\mb{x}))}.
\end{equation}
\end{definition}
\noindent As with ISSf-CBFs, we may define the point-wise set of control values satisfying \eqref{eq:tissf-cbf}:
\begin{align}
\resizebox{1\hsize}{!}{
    ${K_{\textrm{TISSf}}(\mb{x}) \triangleq \left\{\mb{u}\in\R^m ~\left|~  \dot{h}(\mb{x},\mb{u})\geq-\alpha(h(\mb{x})) + \frac{\Vert L_{\mb{g}}h(\mb{x}) \Vert^2}{\epsilon(h(\mb{x}))} \right. \right\}.}$
}     
\end{align}

\begin{theorem}\label{theo:TISSf}
Let ${\C\subset\R^n}$ be the 0-superlevel set of a continuously differentiable function ${h:\R^n\to\R}$ and $\delta \geq 0$. If $h$ is a TISSf-CBF for \eqref{eq:openloop2} on $\C$ with continuously differentiable function ${\epsilon:\R\to\R_{>0}}$ such that ${\alpha^{-1}\in \K_{\infty,\rm e}}$ is continuously differentiable and $\epsilon$ satisfies:
\begin{equation}
    \label{eq:eps_geq0}
    \der{\epsilon}{r}(h(\mb{x})) \geq 0,
\end{equation}
then for any Lipschitz continuous controller with ${\mb{k}(\mb{x})\in K_{\rm TISSf}(\mb{x})}$ for all ${\mb{x}\in\R^n}$ and for all $\mb{d}$ satisfying ${\Vert\mb{d}\Vert_\infty \leq \delta}$, the system \eqref{eq:closedloop2} is safe with respect to $\C_{ {\delta},T}$ defined as in \eqref{eq:CdT1}-\eqref{eq:CdT3} with ${\gamma_{\rm T}:\R\times\R_{\geq0}}$ defined as:
\begin{equation}    \label{eq:tissf_gamma}
    \gamma_{\rm T}(h(\mb{x}), {\delta}) \triangleq -\alpha^{-1}\left(-\frac{\epsilon(h(\mb{x}))  {\delta}^2}{4}\right).
\end{equation}
\end{theorem}

\begin{proof}
Our goal is to show that the set $\C_{ {\delta},\rm T}$ defined by \eqref{eq:CdT1}-\eqref{eq:CdT3} is forward invariant. For a controller satisfying ${\mb{k}(\mb{x})\in K_{\textrm{TISSf}}(\mb{x})}$ for all ${\mb{x}\in\R^n}$, we have:
\begin{equation}    \label{eq:proof1}
\begin{split}
    \dot{h}(\mb{x},t) =& L_{\mb{f}}h(\mb{x}) + L_{\mb{g}}h(\mb{x}) (\mb{k}(\mb{x}) + \mb{d}(t)) \\
    \geq& -\alpha(h(\mb{x})) + \frac{\Vert L_{\mb{g}}h(\mb{x}) \Vert^2}{\epsilon(h(\mb{x}))}  + L_{\mb{g}}h(\mb{x}) \mb{d}(t).
\end{split}
\end{equation}
Noting that:
\begin{equation*}
L_{\mb{g}}h(\mb{x}) \mb{d}(t) \geq -\| L_{\mb{g}}h(\mb{x}) \| \|\mb{d}\|_\infty  {\geq -\| L_{\mb{g}}h(\mb{x}) \|\delta}
\end{equation*}
and ${\epsilon(h(\mb{x}))>0}$ for all ${\mb{x}\in\R^n}$ and ${t\geq 0}$, adding and subtracting ${\frac{\epsilon(h(\mb{x}))  {\delta}^2}{4}}$, and completing the squares yields:
\begin{equation}   \label{eq:proof3}
    \dot{h}(\mb{x},t) \geq -\alpha(h(\mb{x})) - \frac{\epsilon(h(\mb{x}))  {\delta}^2}{4}.
\end{equation}
Next, taking the time derivative of the function ${h_{ {\delta},\rm T}}$ defined by \eqref{eq:tissf_hd} yields:
\begin{align}    
    \dot{h}_{ {\delta},\rm T}(\mb{x}, {\delta},t) =
    \left(1 + \derp{\gamma_{\rm T}}{h}(h(\mb{x}), {\delta}) \right) \dot{h}(\mb{x},t).
    \label{eq:proof4}
\end{align}
As $\epsilon$ satisfies \eqref{eq:eps_geq0} and $\gamma_{\rm T}$ is defined as in \eqref{eq:tissf_gamma}, we have:
\begin{equation}
    \label{eq:Apos}
   1 + \derp{\gamma_{\rm T}}{h}(h(\mb{x}), {\delta})  > 0.
\end{equation}
Substituting \eqref{eq:proof3} into \eqref{eq:proof4}, we obtain:
\begin{equation}
    \label{eq:proof6}
    \begin{split}
    & \dot{h}_{ {\delta},\rm T}(\mb{x}, {\delta},t) \geq \\
    & \left(1 + \derp{\gamma_{\rm T}}{h}(h(\mb{x}), {\delta}) \right) \left(  -\alpha(h(\mb{x})) - \frac{\epsilon(h(\mb{x}))  {\delta}^2}{4}  \right). \nonumber
    \end{split}
\end{equation}
Next, we consider a state ${\mb{x}\in\partial \C_{ {\delta},\rm T}}$, such that ${h_{ {\delta},\rm T}(\mb{x})=0}$, for which \eqref{eq:tissf_hd} and \eqref{eq:tissf_gamma} imply: 
\begin{equation}
    -\alpha(h(\mb{x})) - \frac{\epsilon(h(\mb{x}))  {\delta}^2}{4} = 0,
\end{equation}
yielding:
\begin{equation}
    \label{eq:proof7}
    \dot{h}_{ {\delta},\rm T}(\mb{x}, {\delta},t) \geq 0.
\end{equation}
Additionally, we have ${-\alpha(h(\mb{x})) \geq 0}$ when ${h_{ {\delta},\rm T}(\mb{x})=0}$ by construction. Thus, the strict inequality in \eqref{eq:tissf-cbf} requires that ${\derp{h}{\mb{x}}(\mb{x}) \neq \mb{0}}$ for ${\mb{x}\in\partial\C_{\delta,T}}$.
Finally, we have: 
\begin{equation}
    \derp{h_{ {\delta},\rm T}}{\mb{x}}(\mb{x}, {\delta}) = \underbrace{\left(1 + \derp{\gamma_{\rm T}}{h}(h(\mb{x}), {\delta}) \right)}_{>0} \derp{h}{\mb{x}}(\mb{x}) \neq 0,
\end{equation}
using \eqref{eq:Apos}.
Therefore, Nagumo's theorem \cite{Nagumo:1942} implies the set $\C_{ {\delta},\rm T}$ is forward invariant as ${h_{ {\delta},\rm T}(\mb{x}, {\delta})=0}$ implies ${\dot{h}_{ {\delta},\rm T}(\mb{x}, {\delta},t) \geq 0}$, and ${\derp{h_{ {\delta},\rm T}}{\mb{x}}(\mb{x}, {\delta}) \neq 0}$.
\end{proof}
\begin{remark}
We note that the condition on $\epsilon$ in \eqref{eq:eps_geq0} is stronger than necessary, but is an easily verifiable design condition. In particular, $\epsilon$ only needs to satisfy that for $\delta>0$ and ${\mb{x}\in\R^n}$:
\begin{equation}    \label{eq:eps_cond}
    \der{\epsilon}{r}(h(\mb{x})) > -\frac{4}{ {\delta}^2} 
    \dfrac{1}{D\alpha^{-1}(-\epsilon(h(\mb{x}))\delta^2/4)}.
\end{equation}
Noting that ${\alpha^{-1}\in\K_{\infty,{\rm e}}}$ and is continuously differentiable implies ${0\leq D\alpha^{-1}(-\epsilon(h(\mb{x}))\delta^2/4)< \infty}$ for all ${\mb{x}\in\R^n}$. The right-hand side of \eqref{eq:eps_cond} approaches $-\infty$ as ${\delta \to 0}$ or ${D\alpha^{-1}(-\epsilon(h(\mb{x}))\delta^2/4)\to 0}$, making $\epsilon$ unconstrained. 
\end{remark}

\begin{example}
\label{ex:tissf}
For the disturbed system in Example \ref{ex:dist} with ${\delta = 3}$, we pick the following differentiable function:
\begin{equation}
    \label{eq:ex_eps}
    \epsilon(h(\mb{x})) \triangleq \epsilon_0 {\rm e}^{ {\lambda} h(\mb{x})},
\end{equation}
with constants ${\epsilon_0, {\lambda} \in \R_{>0}}$.
Considering the controller:
\begin{equation}  \label{eq:ex_contTISSf}
    u = k(\mb{x}) + \frac{L_\mb{g}h(\mb{x})}{\epsilon(h(\mb{x}))} 
    = x_1 - 2x_2 - 1 - \frac{1}{\epsilon_0 {\rm e}^{ {\lambda} (x_1-x_2)}},
\end{equation}
it can be shown that $h$ as defined in \eqref{eq:ex_h} is a TISSf-CBF for ${\alpha(r)=r}$. Furthermore, the choice of the function $\epsilon$ with ${\epsilon_0,\lambda > 0}$ in \eqref{eq:ex_eps} satisfies the condition \eqref{eq:eps_geq0}. Thus, the set:
\begin{align}
	\resizebox{.88\hsize}{!}{
		$\C_{ {\delta},\rm T} = \left\{ \mb{x} \in \R^2  ~\left|~  x_1 - x_2 + \frac{\epsilon_0 {\rm e}^{\lambda (x_1-x_2)}  {\delta}^2}{4} \geq 0  \right. \right\},$
	}     
\end{align}
is forward invariant. It is noted that ${\lambda=0}$ recovers ISSf setup with ${\epsilon(h(\mb{x})) \equiv \epsilon_0}$, whereas a larger $\lambda$ pulls $\C_{ {\delta},\rm T}$ closer to the safe set $\C$, and decreases the effect of the corresponding term in the controller \eqref{eq:ex_contTISSf} for ${h(\mb{x})>0}$.
We depict the set $\C_{ {\delta},\rm T}$ in Fig.~\ref{fig:Example}(d) considering ${\epsilon_0=\rm{e}^{-2}}$, ${\lambda=2}$ and along with simulation results. All solution trajectories stay within the set $\C_{ {\delta},\rm T}$ that is close to $\C$, and the overcompensation inside the set $\C$ is prevented as $\epsilon(h(\mb{x}))$ takes larger values when ${h(\mb{x}) \gg 0}$.
\end{example}

\begin{remark}
Ensuring the forward invariance of a slightly larger set suggests modifying the set $\C$ in the presence of a disturbance. Specifically, considering a set $\overline{\C} \subseteq \C$ such that $\overline{\C}_{\delta,\rm T} \subseteq \C$ implies the safety of the original set $\C$.
\end{remark}

\begin{remark}
Rather than utilizing \eqref{eq:TISSf_cont} by modifying an existing safe controller ${\mb{k}(\mb{x})\in K_{\textrm{CBF}}(\mb{x})}$, the condition \eqref{eq:tissf-cbf} can be utilized to synthesize an optimization-based controller via the following quadratic program:
\begin{align}
\tag{TISSf-QP}
\label{eq:TISSf-QP}
\mb{k}_{\rm QP}(\mb{x}) = & \,\,\underset{\mb{u} \in \R^m}{\argmin}    \quad \frac{1}{2} \Vert \mb{u}-\mb{k}(\mb{x})  \Vert^2  \\
 \mathrm{s.t.} \quad & L_\mb{f}h(\mb{x}) + L_\mb{g}h(\mb{x})\mb{u} >
-\alpha(h(\mb{x})) + \frac{\Vert L_{\mb{g}}h(\mb{x}) \Vert^2}{\epsilon(h(\mb{x}))}, \nonumber
\end{align}
that may intervene less compared to \eqref{eq:TISSf_cont}.

\end{remark}

\section{Input-to-state Safety for Automated Trucks}\label{sec:truck}

Here we implement previously introduced \textit{tunable input-to-state safe control barrier functions} (TISSf-CBF) to design the longitudinal controller of a connected automated truck while responding to the motion of a connected vehicle ahead. We use a simplified model to design the controller and we demonstrate that it can maintain safety in real-world safety-critical scenario by simulating a high-fidelity vehicle model. 

Consider the simplified model for the system:
\begin{equation}   \label{eq:truck_model}
        \dot{D} =v_{\rm L}-v, ~~~
        \dot{v} =u+d(t), ~~~
        \dot{v}_{\rm L} = a_{\rm L},
\end{equation}
where $D$ denotes the bumper-to-bumper headway distance between the truck and the vehicle ahead, $v$ is the longitudinal velocity of the truck, while $v_{\rm L}$ and $a_{\rm L}$ are longitudinal velocity and acceleration of the preceding vehicle. The state is defined by ${\mb{x} = [D, v, v_{\rm L}] \in \R^3}$ while $u$ denotes the input. The input disturbance $d(t)$ represents the unmodeled dynamics, i.e., rolling resistance, air drag, powertrain dynamics and delays related to sensing, computation and communication. We remark that while the distance $D$ and the velocities ${v, v_{\rm L}}$ can be measured by sensors, to obtain the acceleration signal
$a_{\rm L}$ V2X communication is needed \cite{Chaozhe:18}. That is why we refer to the controller below as connected cruise control rather than adaptive cruise control. Finally, to incorporate physical limitations we prescribe bounds for the input and the states:
\begin{align}
        \label{eq:truck_bounds}
    u \in [-\underline{a}, \overline{a}], ~~~
    a_{\rm L} \in [-\underline{a}_{\rm L}, \overline{a}_{\rm L}], ~~~
    v,v_{\rm L} \in [0,\overline{v}], 
\end{align}
where ${\underline{a}=6~[{\rm m}/{\rm s^2}]}$, ${\overline{a}=2~[{\rm m}/{\rm s^2}]}$, ${\underline{a}_{\rm L}=10~[{\rm m}/{\rm s^2}]}$, ${\overline{a}_{\rm L}=3~[{\rm m}/{\rm s^2}]}$ and ${\overline{v}=20~[{\rm m}/{\rm s}]}$ are considered.

In order to ensure safety the truck needs to keep a safe distance from the preceding vehicle which may depend on the velocities. This leads to the safety function candidate:
\begin{equation}  \label{eq:truck_h}
    h(\mb{x}) = D - \hat{h}(v,v_{\rm L}),
\end{equation}
where we use
\begin{equation} \label{eq:truck_CDF}
    \hat{h}(v,v_{\rm L}) = D_{\rm sf} + \theta v + \eta v_{\rm L} + \xi v^2 + \zeta v v_{\rm L} + \omega v_{\rm L}^2.
\end{equation}
The parameters ${D_{\rm sf}=2~[{\rm m}]}$, ${\theta=1.1~[{\rm s}]}$, ${\eta=0.6~[{\rm s}]}$, and ${\xi=-\zeta=-\omega=0.03~[{\rm s^2}/{\rm m}]}$
are chosen such that the truck is kept beyond a critical time headway of 1 second while considering the physical bounds \eqref{eq:truck_bounds}.
It can be shown that for \eqref{eq:truck_h}-\eqref{eq:truck_CDF} we have ${\derp{h}{\mb{x}} \neq 0}$ when ${h(\mb{x})=0}$.

We define the set: 
\begin{equation}
    \C = \left\{\mb{x} \in \R^3 ~\left|~ D - \hat{h}(v,v_{\rm L}) \geq 0 \right. \right\},
\end{equation}
and to render it safe, we utilize a feedback controller 
\begin{equation} \label{eq:truck_nominal}
k(\mb{x})= k_1 (V(D)-v) + k_2 (v_{\rm L}-v),
\end{equation}
where ${k_1,k_2 \in \R}$ are the controller parameters. The first term in \eqref{eq:truck_nominal} contains the range policy function ${V:\R\to\R_{\geq 0}}$:
\begin{equation} 	\label{eq:truck_V}
V(D) = \max\big\{0,\min\{\kappa(D-D_{\rm st}), \overline{v} \}\big\},
\end{equation}
where $D_{\rm st}$ is the desired stopping distance and ${1/\kappa}$ defines the desired time headway. The second term in \eqref{eq:truck_nominal} responds to the speed mismatch. Considering ${\alpha(r)=r}$ one may show that the parameters ${k_1=0.7~[1/\text{s}]}$, ${k_2=0.75~[1/\text{s}]}$, ${\kappa=0.7~[1/\text{s}]}$, ${D_{\rm st}=7~[\text{m}]} $ yield ${k(\mb{x}) \in K_{\rm CBF}}$; see \cite{Chaozhe:18}.

In order to incorporate real-world disturbances, numerical simulations are carried out using a high fidelity truck model built in TruckSim and Simulink. This model contains details about the engine, clutch, gearbox, tires and mechanical/hydraulic braking components which inevitably delay the realization of the longitudinal acceleration command and considered as disturbance in the simple model \eqref{eq:truck_model}. Pre-recorded experimental data is used to represent the preceding vehicle's speed $v_{\rm L}$ and acceleration $a_{\rm L}$; see Fig.~\ref{Fig:SimResults}(b,d). In particular, the recorded data correspond to an emergency braking scenario in city traffic where the preceding vehicle decelerates from ${15~{\rm [m/s]}}$ to a full stop with acceleration reaching ${-8~{\rm [m/s^2]}}$. The simulation results are presented in Fig.~\ref{Fig:SimResults} as blue curves. While the truck avoids the crash, it is unable to maintain safety ($h$ becomes negative in panel (c)) as the controller \eqref{eq:truck_nominal} is designed using the model \eqref{eq:truck_model} with no disturbance.

\begin{figure}[t]
	\centering
	\includegraphics[trim=15 0 30 20,clip, width=0.8\linewidth]{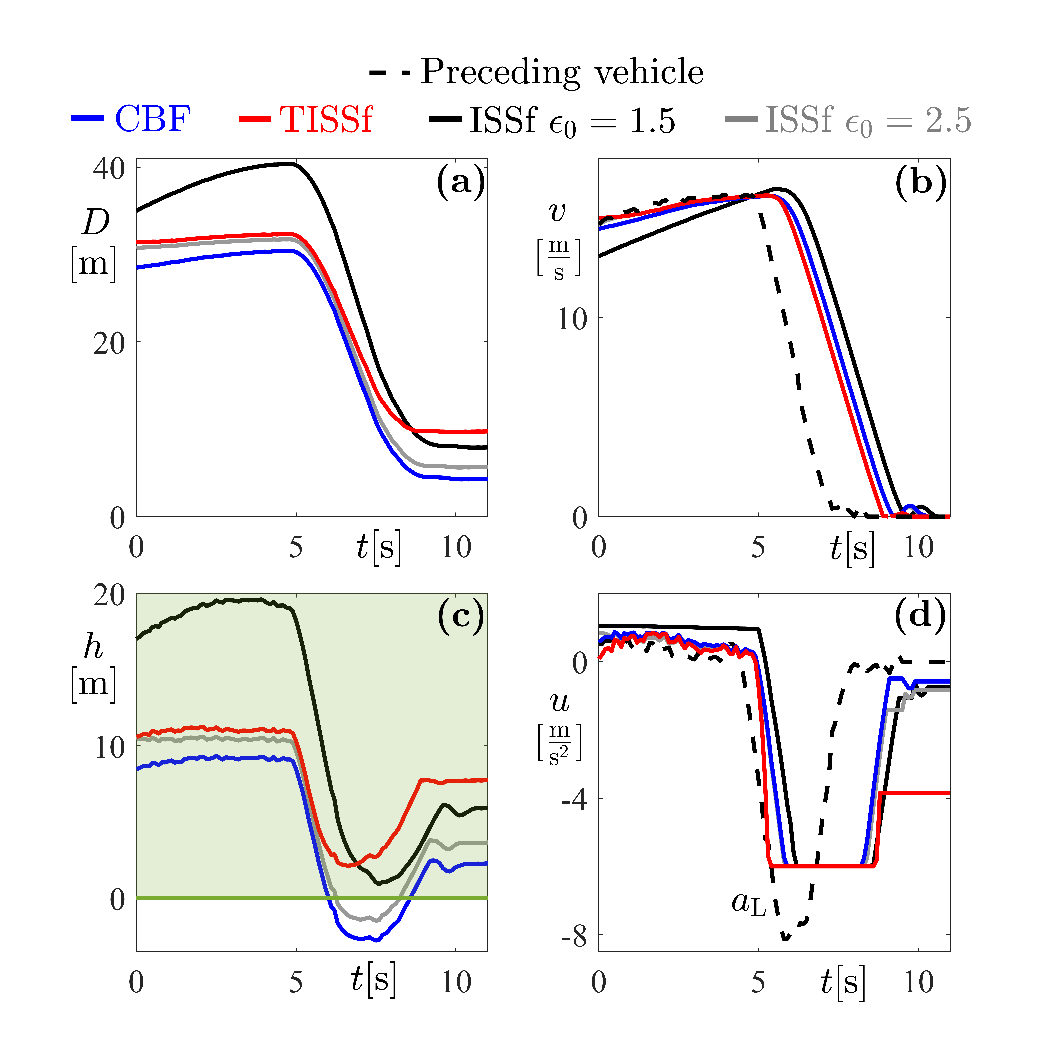}
	\caption{High-fidelity simulation results showing (a) distance, (b) velocities, (c) the barrier function $h$ defined by \eqref{eq:truck_h}, and (d) input $u$. Simulations are carried out with the CBF controller \eqref{eq:truck_nominal} (blue), the ISSf-CBF controller \eqref{eq:truck_contISSf} for ${\epsilon_0=1.5}$ (black) and ${\epsilon_0=2.5}$ (gray), and the TISSf-CBF controller \eqref{eq:truck_contTISSf} (red). }  \label{Fig:SimResults}
	   \vspace{-4mm}
\end{figure}

Next we modify the controller \eqref{eq:truck_nominal} as:
\begin{equation} 
\label{eq:truck_contISSf}
k_{\rm ISSf}(\mb{x}) = k_1 (V(D)-v) + k_2 (v_{\rm L}-v) - \frac{1}{\epsilon_0}\derp{\hat{h}}{v}(v,v_{\rm L}),
\end{equation}
(cf.~\eqref{eq:OISSf_cont}) where we used  ${L_\mb{g}h(\mb{x})=-\derp{\hat{h}}{v}(v,v_{\rm L})}$. Since $h$ is an ISSf-CBF for any ${\epsilon_0 > 0}$ the set:
\begin{equation}
    \C_ {\delta} = \left\{\mb{x} \in \R^3 ~\left|~ D - \hat{h}(v,v_{\rm L}) { + \frac{\epsilon_0  {\delta}^2}{4} \geq 0} \right. \right\},
\end{equation}
is forward invariant according to Theorem~\ref{theo:ISSf}. The corresponding simulations are shown in Fig.~\ref{Fig:SimResults} by black and gray curves for two different values of $\epsilon_0$. Panel (c) shows that the system leaves the original set $\C$ for ${\epsilon_0=2.5}$ (gray) as indicated by ${h<0}$. Choosing ${\epsilon_0=1.5}$ (black) ensures that ${h>0}$, it substantially affects the performance by making the truck to keep larger distances even when traveling with a constant speed (which would likely invite other vehicles to cut in). 

Finally, we consider the TISSf-CBF setting and modify the controller \eqref{eq:truck_nominal} as:
\begin{equation} 
\label{eq:truck_contTISSf}
\begin{split}
k_{\rm TISSf}(\mb{x})&= k_1 (V(D)-v) + k_2 (v_{\rm L}-v) 
\\
& - \frac{1}{\epsilon_0{\rm e}^{  {\lambda} (D-\hat{h}(v,v_{\rm L})) }}\derp{\hat{h}}{v}(v,v_{\rm L}),
\end{split}
\end{equation}
with ${\epsilon(h(\mb{x})}$ as defined in \eqref{eq:ex_eps}; cf.~\eqref{eq:TISSf_cont}. It can be verified that any parameter combination  ${\epsilon_0, \lambda > 0}$ make $h$ a TISSf-CBF. Thus, according to Theorem~\ref{theo:TISSf}, the set: 
\begin{align}
\resizebox{.88\hsize}{!}{
    ${\C_{ {\delta},\rm T} = \bigg\{ \mb{x} \in \R^3 \bigg|
     D -\hat{h}(v,v_{\rm L}) { + \frac{\epsilon_0 {\rm e}^{  \lambda (D-\hat{h}(v,v_{\rm L})) }  {\delta}^2}{4} } \geq 0  \bigg\},}$
}     
\end{align}
is forward invariant. The corresponding simulation results are shown in Fig.~\ref{Fig:SimResults} as red curves for parameters  ${\epsilon_0={\rm e^{-5}}~[{\rm m}]}$ and ${\lambda=0.5~[1/{\rm m}]}$. Observe that the system stays within the original set $\C$ without leaving a large distance headway at a steady state speed.

\section{Conclusion} \label{sec:conclusion}

In this letter, we first reviewed the notion of \textit{input-to-state safety} formulated by \textit{input-to-state safe control barrier functions} (ISSf-CBF), and provided the conditions for the forward invariance of a set under input disturbance. We then presented the new \textit{tunable input-to-state safe control barrier functions} (TISSf-CBF) to remedy the lack of tunability of the previous setup. We demonstrated the effectiveness of the new method in simulation environment with a high fidelity automated truck model.
Future work will include implementing a safety-critical control based on TISSf-CBF to a real automated truck and ensuring safety experimentally.



\end{document}